\title{Coloring in Graph Streams}
	\author{%
  Suman K. Bera
  \thanks{Department of Computer Science, Dartmouth College.}
  \and
  Prantar Ghosh $^\fnsymbol{footnote}$
}
\date{\today}
\setlist[enumerate]{
  noitemsep
}
\newtheorem{theorem}{Theorem}[section]
\newtheorem{lemma}[theorem]{Lemma}
\theoremstyle{definition}  \newtheorem{definition}[theorem]{Definition}
\newcommand{\EE}{\mathbb{E}}
\newcommand{\eps}{\varepsilon}
\newcommand{\tO}{\widetilde{O}}
\newcommand{\etal}{{et al.}\xspace}
\newcommand{\ceil}[1]{\lceil{#1}\rceil}
\renewcommand{\geq}{\geqslant}
\renewcommand{\leq}{\leqslant}
\DeclareMathOperator{\poly}{poly}
\DeclareMathOperator{\polylog}{polylog}
\DeclareMathOperator{\degen}{\kappa}
\DeclareMathOperator{\col}{\psi}
\DeclareMathOperator{\out}{out}
\begin{document}

\maketitle

\begin{abstract}
In this paper, we initiate the study of the vertex coloring
problem of a graph in the semi streaming model. In this model,
the input graph is defined by a stream of edges, arriving in 
adversarial order and any algorithm must process the edges in 
the order of arrival using space linear (up to polylogarithmic factors)
in the number of vertices of the graph. In the offline settings, there
is a simple greedy algorithm for $(\Delta+1)$-vertex coloring of a graph
with maximum degree $\Delta$. We design a one pass randomized
streaming algorithm for $(1+\eps)\Delta$-vertex coloring problem
for any constant $\eps >0$ using $O(\eps^{-1} n \poly\log n)$ space where
$n$ is the number of vertices in the graph. Much more color
efficient algorithms are known for graphs with bounded arboricity
in the offline settings. Specifically, there is a simple
$2\alpha$-vertex coloring algorithm for a graph with 
arboricity $\alpha$. We present a $O(\eps^{-1}\log n)$ pass randomized
vertex coloring algorithm that requires at most $(2+\eps)\alpha$
many colors for any constant $\eps>0$ 
for a graph with arboricity $\alpha$ in the semi
streaming model.
\end{abstract}

\section{Introduction}

Graph coloring is a fundamental and one of the most intensively 
studied problems in computer science:
given a graph $G=(V,E)$, assign colors to the vertices in $V$
such that no two adjacent vertices share the same color. 
The minimum number of colors needed to color a graph is 
called the chromatic number of the graph. Finding the chromatic
number of a graph is a notoriously hard problem: assuming 
$P \neq NP$, there does not exist any polynomial time algorithm
that approximates the chromatic number of a graph within a 
factor of $n^{1-\eps}$ for any constant $\eps >0$ where $n$ is 
the number of vertices in the graph~\cite{feige1996zero,zuckerman2006linear,khot2006better}. 
However, there are known upper bounds on the chromatic number. 
A graph with maximum degree $\Delta$ can be colored using
at most $\Delta+1$ many colors. There is a simple linear
time greedy coloring algorithm
\footnote{A greedy coloring algorithm considers the vertices of a 
graph sequentially, assigning it the first available color permitted
by the already colored neighbors of the vertex.}
that achieves such a coloring. A far superior bound is known for 
a rather large class of graph families: graphs with bounded degeneracy
\footnote{A graph $G$ is $\degen$-degenerate if every subgraph of $H$ of
$G$ has a vertex of degree at most $\degen$. Degeneracy of a graph is 
the minimum $\degen$ for which the graph is $\degen$-degenerate. It is
easy to see that $\degen \leq \Delta$ for any graph.}.
A graph with degeneracy $\degen$ has a chromatic number of at most $\degen + 1$.
A greedy coloring algorithm on the following vertex ordering 
produces such a coloring of the graph. Pick the vertex with 
minimum degree, remove the edges incident on this vertex,
recursively order the remaining vertices, then place 
this vertex at the end of the list. Arboricity
\footnote{Arboricity of an undirected graph is the 
minimum number of forests into which
the edges of the graph can be partitioned.}
of a graph is a closely related concept to degeneracy. 
For a graph with degeneracy $\degen$ and arboricity $\alpha$,
the following relations hold: $\alpha \leq \degen \leq 2\alpha -1$.
Hence, a graph with arboricity bounded by $\alpha$ has a chromatic 
number of at most $2\alpha$, and the above greedy coloring
algorithm produces such a coloring.

In this work, we initiate the study of the graph coloring problem
in the semi streaming model.
In this model, the input graph is presented as a stream of edges;
any algorithm  must process the edges in the order of its arrival, 
in one or more passes, using $O(n \polylog n)$ space, 
where $n$ is the number of vertices
~\cite{Muthukrishnan05,FeigenbaumKMSZ05}. 
The goal of the problem is to color the vertices of the graph using 
as few colors as possible. We first investigate whether 
we can find a $(1+\eps)\Delta$-vertex coloring of a 
graph with maximum degree $\Delta$ 
for any constant $\eps>0$ in this model. We show that such a coloring 
is possible to find in one pass. Our space usage involves a
factor of $\log n$ and an $\eps$ dependant factor;
we omit these factors for clarity, hiding them
into an $\tO(\cdot)$ notation.  
\begin{itemize}
    \item There is a randomized one pass streaming algorithm
    that finds a $(1+\eps)\Delta$-vertex coloring of the input
    graph using $\tO(n)$ amount of space.
\end{itemize}
We then explore whether one can significantly improve the 
number of colors for more general graph families. In this regard,
we consider bounded arboricity graph families. Specifically, 
we ask whether it is possible to color the vertices of a graph with
arboricity $\alpha$ using at most $(2+\eps)\alpha$ many colors 
for any constant $\eps>0$. We answer this question in the affirmative
as well, albeit at the the expense of $O(\eps^{-1}\log n)$ many
passes over the stream.
\begin{itemize}
    \item There is a randomized $O\left(\eps^{-1} \log n\right)$ pass streaming 
        algorithm that finds a $(2+\eps)\alpha$-vertex coloring of the input
        graph using $\tO(n)$ amount of space, for any constant $\eps >0$
        where $\alpha$ is the arboricity of the input graph.
\end{itemize}

\subsection{Our Techniques}

We first discuss a high level overview of 
our $(1+\eps)\Delta$-coloring algorithm.
We observe that, if we do not have any
restriction on the space, then 
there is a simple one pass algorithm for
$(\Delta+1)$-coloring of the input graph.
Assign every vertex the same color initially.
Upon arrival of an edge, store it, and 
if both of its end-points share the same
color, then recolor one of them using an 
``available'' color (which always exists) 
that is not assigned to any of its neighbors. 
However, we do not have the luxury of storing
the entire graph. Hence, we resort to a two 
phase coloring algorithm. In the first phase,
we randomly partition the
vertex set into $O(\Delta / \log n)$ 
many subsets such that for each of these
subsets, the subgraph induced by it has
maximum degree $O(\log n)$ 
with high probability (over the randomness
of the partitioning). The random 
partitioning is realized by coloring each
vertex independently and uniformly
at random using a $O(\Delta / \log n)$ 
size color palette. Then we are able to
store each of these induced subgraphs entirely in
$\tO(n)$ space. Note that the first phase coloring is not a proper 
coloring of the graph $G$, and is carried out
to decompose the graph into smaller monochromatic 
subgraphs. Now, in the second phase 
we color each induced subgraph
with a new palette using the aforementioned
algorithm to get an $O(\Delta)$-vertex coloring of the
original graph. The second phase coloring 
results in a proper vertex coloring of the input graph.
By setting various parameters
suitably, we bound the number of colors 
by $(1+\eps)\Delta$. The detailed description
of the algorithm is given in \cref{sec:deg}.

We now present an overview of our second result, a
$O((1/\eps)\log n)$ pass $(2+\eps)\alpha$-vertex coloring algorithm 
for graphs with arboricity $\alpha$. To design a
color efficient algorithm for bounded arboricity
graphs, we follow the broad strategy employed 
by Barenboim and Elkin in \cite{barenboim2010sublogarithmic}.
They gave a distibuted $O(\alpha)$-vetrex coloring
algorithm that runs in $O(\alpha \log n)$ many rounds.
The crux of their algorithm is to find an orientation
of the edges of the undirected input graph such that
the maximum out-degree of any vertex in the 
oriented directed graph is
bounded by $O(\alpha)$. Call such an orientation
{\em useful}. We show how to find a {\em useful} orientation 
in $O(\log n)$ many passes using $\tO(n)$ space in 
the streaming model, following the template 
of \cite{barenboim2010sublogarithmic}. 
Even with a {\em useful} orientation of 
the edges, we are not quite done. The maximum
out-degree of any vertex is bounded by $O(\alpha)$ in 
the oriented graph, which is prohibitively large with 
the space restriction of the semi streaming model. We employ the
two phase coloring technique discussed above to 
randomly partition the input graph into $O(\alpha/ \log n)$
many induced subgraphs, such that within each 
induced subgraph, the maximum out-degree with respect 
to the {\em useful} orientation is bounded by $O(\log n)$.
By appropriate settings of parameters, we prove that 
this is sufficient to a get a
vertex coloring algorithm that uses at most 
$(2+\eps)\alpha$ many colors for any constant $\eps>0$.
We present this algorithm is details in \cref{sec:arb}.

\subsection{Related Work}

The graph coloring problem is one of the most central
problems in the distributed computing model. 
A monograph by Barenboim and Elkin~\cite{barenboim2013distributed}
gives an excellent overview of the state of the art. 
We mention below few notable results in the synchronous message passing 
({\em SMP}) model. For more detailed results, 
we refer to~\cite{barenboim2013distributed} and
reference therein. In the {\em SMP} model, 
every vertex of the input graph is a processor, and
they communicate with their neighbors (over the edges of the graph)
in synchronous rounds. The running time of an algorithm is the number
of rounds required. There is a randomized $O(\Delta)$-coloring algorithm 
that requires $O\left(\sqrt{\log n}\right)$ time~\cite{kothapalli2006distributed}. 
This result was improved by Schneider and Wattenhofer~\cite{schneider2010new}, and then 
subsequently by Barenboim \etal in \cite{barenboim2016locality} who came up with a 
$(\Delta+1)$-coloring algorithm with running time $2^{O(\sqrt{\log \log n})}$.
Barenboim and Elkin \cite{barenboim2010sublogarithmic} 
studied the $O(\alpha)$-vertex coloring problem for graphs with 
arboricity bounded by $\alpha$. They gave a deterministic 
algorithm that runs in $O(\alpha \log n)$ time and finds 
an $O(\alpha)$-coloring of the input graph, and hence 
remarkably stretching the class of graph families for which
efficient coloring algorithms are known. The main challenge
in converting any of these distributed algorithms
into a streaming algorithm lies in reducing the number 
of rounds. The number of rounds in {\em SMP} model possibly
translates to the number of passes in the streaming settings.
However, in the one pass streaming, it is not clear
how to leverage the distributed algorithms to get 
an efficient streaming algorithm.

The graph coloring problem has been 
studied extensively in the dynamic setting, where the edges of the
graph are inserted and deleted over time, and the goal is to 
maintain a valid vertex coloring of the graph 
after every update. Unlike the streaming setting, in this model
there is no space restriction. The emphasis here 
is to use as few colors as possible while keeping the
{\em update time} small. Bhattacharya \etal~\cite{bhattacharya2018dynamic}
gave a randomized algorithm that maintains $(\Delta + 1)$-vertex 
coloring with $O(\log \Delta)$ expected amortized update time.
They also gave a deterministic algorithm that maintains
$O(\Delta + o(\Delta))$-vertex coloring with $O( \polylog \Delta)$
amortized update time. Barba \etal~\cite{barba2017dynamic}
studied various trade-off between the number of colors used and
update time. However the techniques used in the dynamic settings
do not seem to be readily applicable in the streaming 
setting due to the fundamental differences in the models. 
The problem of edge coloring in the dynamic 
graph has been considered in~\cite{barenboim2017fully,bhattacharya2018dynamic}.
There are many other heuristics based approaches known
in this model with emphasis on experimental supremacy
~\cite{dutot2007decentralized,ouerfelli2011greedy,sallinen2016graph,hardy2018tackling}.

In the streaming model, the problem of coloring an $n$-uniform hypergraph 
using two colors has been studied in~\cite{radhakrishnan2015hypergraph}.
To the best of our knowledge, the general graph coloring 
problem has not been considered in the streaming model before. 

Estimating the arboricity of a graph in the steaming model
is a well studied problem. McGregor \etal~\cite{mcgregor2015densest}
gave a one pass $(1+\eps)$-approximation algorithm to 
estimate the arboricity of graph using $\tO(n)$ space. 
Bahmani \etal~\cite{bahmani2012densest} gave a matching
lower bound.

\section{Preliminaries}

Throughout this paper, we consider the input graph $G=(V,E)$ to 
be a simple undirected graph with $|V|=n$ and $|E|=m$.
We work with streaming model where the input graph
is presented as a stream of edges $(e_1,e_2,\ldots,e_m)$
in some adversarial order. We consider cash-register 
variation of this model in which edges once inserted, are never
deleted. For a vertex $v\in V$, $N(v)$ denotes its set of neighbors 
and $\deg(v) = |N(v)|$ denotes its degree. 
The maximum degree of a graph is denoted by $\Delta = \max_{v\in V} \deg(v)$.

An {\em orientation} of edges of an undirected
graph is an assignment of a direction to each edge of the graph. 
An {\em oriented graph} is a directed graph
obtained by orientation of edges of the corresponding
undirected graph. For a vertex $v$ in a directed graph $G$, 
$N_G^+(v)$ denotes the set of out-neighbors of $v$ and 
$\deg_G^+(v) = |N_G^+(v)|$ denotes its out-degree. We drop
the subscript $G$ if the graph $G$ is clear from the context. 
We denote the maximum out-degree of any vertex by
$\Delta^{\out}$. A directed acyclic graph or DAG is a
directed graph that has no directed cycle.

\begin{definition}[Degeneracy]
A graph $G$ is $\degen$-degenerate if every subgraph $H$ of
$G$ has a vertex of degree at most $\degen$. Degeneracy of a graph is 
the minimum $\degen$ for which the graph is $\degen$-degenerate. 
\end{definition}

\begin{definition}[Arboricity]
Arboricity of an undirected graph is the 
minimum number of forests into which
the edges of the graph can be partitioned. 
We denote the arboricity of $G$ by $\alpha_G$. We drop
the subscript $G$ if the underlying graph $G$ is clear from the context.
By the work of Nash-Williams~\cite{nash1964decomposition}, we have
\[
    \alpha_{G} = \max _{S \subseteq V : |S|>1} \left\lceil \frac{|E(S)|}{|S|-1} \right\rceil \,.
\]
\end{definition}
It is easy to see that both $\degen \leq \Delta$ and $\alpha \leq \Delta$.
A tighter bound of $\alpha \leq \lceil (\Delta+1)/2 \rceil$ is due to 
Chartrand \etal\cite{chartrand1968point}.
It is also known that $\alpha \leq \degen \leq 2\alpha-1$.

\begin{definition}[Vertex Coloring]
A proper vertex coloring of a graph $G=(V,E)$ with a set of colors $C$ 
is a function $\col: V \to C$ such that $\{u,v\}\in E \Rightarrow \col(u)\ne \col(v)$.
\end{definition}

The {\em Chromatic number} of a graph is defined as the minimum number of 
colors needed to get a proper vertex coloring 
of the graph. We denote the chromatic number by $\chi$. 
It trivially holds that $\chi \leq \Delta + 1$. 
For bounded degeneracy graphs, the 
upper bound on $\chi$ improves to $\degen +1$.
Similarly for graphs with arboricity bounded by $\alpha$,
we have $\chi \leq 2\alpha$.

We apply the following versions of the Chernoff bound several times in the paper. 
Let $X_1,\ldots, X_n$ be independent random variables that take values in $\{0, 1\}$. 
Let $X = \sum_{i=1}^n X_i$ denote their sum and 
let $\mu = \EE[X]$ denote its expected value. Then
\begin{align*}
    \Pr[X\geq (1+\delta)\mu)]  
    & \leq \exp\left(-{\frac{\delta^2 \mu}{3}}\right) 
    &\text{ for } 0\leq \delta \leq 1 \,, \\
    \Pr[X\geq (1+\delta)\mu)] 
    &\leq \exp\left(-{\frac{\delta \mu}{3}}\right) 
    & \text{ for } \delta \geq 1\,, \\
    \Pr[X\leq (1-\delta )\mu ] 
    & \leq \exp\left(-{\frac {\delta ^{2}\mu }{3}}\right) 
    & \text{  for } 0\leq \delta \leq 1 \,.
\end{align*}

\section{$(1+\eps)\Delta$-Vertex Coloring}
\label{sec:deg}
In this section we design a simple one pass streaming
algorithm that finds a $(1+\eps)\Delta$-vertex coloring 
of the input graph using $\tO(n)$ amount of space.
We assume that $\Delta$ is known to us. 
In the full version of this paper, we present a slightly
modified algorithm that removes this assumption. 

Suppose $\Delta = O(\log n)$. Hence, we are permitted
to store every edge of the input graph. Then there is a
trivial one pass streaming algorithm that maintains
a $(\Delta+1)$-vertex coloring of the input graph
after every edge update. The algorithm
initializes every vertex with the same color.
Upon arrival of each edge in the stream, it is stored, and 
if both of its end-points share the same
color, then we recolor one of the end points with an 
``available'' color that is not
assigned to any of its neighbors. Since, 
there are $\Delta +1$ colors in the palette,
there always exists an ``available'' color.

Now we consider the interesting case when
$\Delta = \omega(\log n)$. In this case, we perform a two
phase coloring of the input graph. In the first phase, 
we use $\left\lceil{\frac{\eps \Delta}{2c \log n}} \right\rceil$
many colors, where $c$ is a constant to be
set later in the analysis to bound the
probability of success. 
Let $\col: V \rightarrow
\Big\{1,\ldots,{\left\lceil{\frac{\eps \Delta}{2c \log n}} \right\rceil}\Big\}$ denote 
this coloring. Note that this coloring may not be a proper coloring of the graph $G$, 
and is actually done to decompose the graph into smaller induced subgraphs 
(each color induces a subgraph) before the stream arrives. 
The first phase coloring results in
$\left\lceil{\frac{\eps \Delta}{2c \log n}} \right\rceil$ many 
monochromatic subgraphs of $G$. 
Let $V_i$ be the set of vertices with color $i$, 
and $G_i$ be the corresponding
induced subgraph, for all 
$i \in \Big\{1,\ldots,\left\lceil{\frac{\eps \Delta}{2c \log n}} \right\rceil \Big\}$.
Denote by $\Delta_i$ the maximum degree of 
a vertex in the graph $G_i$. In the second phase,
we color each subgraph $G_i$ in parallel, using  
distinct palettes. We show that 
with high probability $\Delta_i \leq (1+2/\eps) c \log n$, 
and hence $(1+2/\eps)c \log n + 1$ many colors are sufficient 
to color each subgraph $G_i$. We, in fact, store 
the entire graph $G_i$ for all $i$ as the stream arrives,
while recoloring the vertices as needed. This can be 
easily done by the algorithm described in the beginning 
of this section for the case $\Delta=O(\log n)$.
Since each $G_i$ is colored using a 
different palette, the second phase coloring 
results in a proper coloring of the original graph $G$. The final coloring of the 
vertices is due to this second phase.
We present the coloring procedure in \cref{algo:color}. 

\begin{algorithm}
    \caption{$(1+\eps)\Delta$-Vertex Coloring Algorithm}
    \label{algo:color}
    \begin{algorithmic}[1] 
        \Require Undirected graph $G=(V,E)$, $\eps$, $\Delta$.
        \Ensure $(1+\eps)\Delta$-vertex coloring of $G$.
        \Statex \textbf{Pre-processing:} 
        \Indent
            \State $\ell \gets \left\lceil{\frac{\eps \Delta}{2c \log n}} \right\rceil $,
                    where $c$ is a large constant. Let $L=\{1,2,\ldots,\ell \}$.
            \State Assign each vertex a color from the set $L$ 
                    independently and uniformly at random. 
            \State Denote this coloring by $\col: V \to L$.
            \State Let $V_i$ be the set of vertices with color $i$, for all $i\in L$.
            \State Let $G_i$ be the subgraph induced by $V_i$, for all $i\in L$.
            \State $r \gets (1+2/\eps)c\log n +1$.
            \State Let $L_i = \{ a_{i,1},a_{i,2},\ldots, a_{i,r} \}$ for all $i\in L$.
            \State Let $\col_i: V_i \to L_i$ for each $i\in L$.
            \State Initialize $\col_i(v)=a_{i,1}$ for all  $i\in L$ and $v \in V_i$.
        \Unindent
        \Statex \textbf {Stream Processing:}
        \Indent 
            \For{each edge $e = \{u,v\}$ in the stream}:
                \If{$\col(u)\ne \col(v)$}: 
                    \State Discard the edge.
                \Else:
                    \State Suppose $u,v\in V_i$.
                    \State Store the edge $e$ in the graph $G_i$.
                    \If{$\col_i(u)  = \col_i(v)$}: \label{ln:recolor}
                        \State Set $\col_i(u)$ to a color in $L_i$ 
                                that is not assigned by $\col_i$ to 
                                any neighbor of $u$. \label{ln:findcolor}
                        \State If such a color is not available, then Abort.
                    \EndIf
                \EndIf
            \EndFor
        \Unindent
    \end{algorithmic}
\end{algorithm}

We now analyze \cref{algo:color}. We first prove that 
with high probability \cref{algo:color} always finds an available 
color in \cref{ln:findcolor} during the execution of the algorithm. 
Since we have $r$ many colors in 
$L_i$, it is sufficient to show that $\Delta_i$ is bounded
by $r-1$ with high probability, over the randomness 
of the first phase coloring. This is handled by \cref{lem:deg_bound}.
Hence, \cref{algo:color}
generates a proper coloring with high probability.

\begin{lemma} 
\label{lem:deg_bound}
Let $\Delta_i$ be the maximum degree of any vertex in the graph $G_i$,
as defined in \cref{algo:color}. Then,
with probability at least $1-{1}/{n^{10}}$, 
$\Delta_i \leq (1+2/\eps) c \log n$ for all $i\in L$.
\end{lemma}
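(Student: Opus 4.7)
The plan is to bound $\Delta_i$ by a Chernoff argument applied separately to each (vertex, color) pair, then take a union bound. Fix $v \in V$ and $i \in L$, and let $X_{v,i} = |N(v) \cap V_i|$, the number of neighbors of $v$ that get color $i$ in the first phase. Since each vertex is colored independently and uniformly at random from $L$, we can write $X_{v,i} = \sum_{u \in N(v)} \mathbb{1}[\col(u)=i]$ as a sum of at most $\deg(v) \le \Delta$ independent Bernoulli random variables of parameter $1/\ell$. Hence
\[
\mu := \EE[X_{v,i}] \;\le\; \frac{\Delta}{\ell} \;\le\; \frac{\Delta}{\eps \Delta / (2c \log n)} \;=\; \frac{2c\log n}{\eps}.
\]
The target threshold $T := (1+2/\eps)c\log n$ satisfies $T \ge \mu + c\log n$, so writing $T = (1+\delta)\mu$ we get $\delta \mu \ge c\log n$.

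Next I would apply the appropriate Chernoff tail bound from the Preliminaries. If $\delta \ge 1$, then directly $\Pr[X_{v,i} \ge T] \le \exp(-\delta \mu/3) \le \exp(-c(\log n)/3) = n^{-c/3}$. If $\delta < 1$, which happens only when $\mu$ is near its upper bound and $\eps$ is small, then $\delta \ge \eps/2$ (roughly), and so
\[
\delta^2 \mu \;\ge\; \delta \cdot c\log n \;\ge\; \tfrac{\eps}{2}\, c\log n,
\]
giving $\Pr[X_{v,i} \ge T] \le \exp(-\eps c (\log n)/6) = n^{-\eps c/6}$. In both cases, by choosing the constant $c$ large enough (depending on $\eps$, which is constant), the probability is at most $n^{-12}$.

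To finish, note that $\Delta_i = \max_{v \in V_i} |N(v) \cap V_i| \le \max_{v \in V, i \in L} X_{v,i}$, so the event $\{\exists i : \Delta_i > T\}$ is contained in $\bigcup_{v,i} \{X_{v,i} > T\}$. A union bound over the at most $n \cdot \ell \le n^2$ pairs yields failure probability at most $n^2 \cdot n^{-12} = n^{-10}$, as desired.

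The main obstacle is the Chernoff regime: the multiplicative deviation $\delta$ can be as small as $\Theta(\eps)$ when $\mu$ saturates its upper bound, so the quadratic tail bound only gives $\exp(-\Omega(\eps c \log n))$ rather than $\exp(-\Omega(c\log n))$. This forces $c$ to be chosen as a sufficiently large function of $\eps^{-1}$, but this is harmless since $\eps$ is treated as a constant throughout. A secondary subtlety is that we do not need to condition on $\col(v) = i$: since vertex colors are mutually independent, $X_{v,i}$ has the same distribution regardless, and taking the union bound over all $(v,i) \in V \times L$ already dominates $\max_i \Delta_i$.
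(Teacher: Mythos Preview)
Your argument is correct and follows the same Chernoff-plus-union-bound strategy as the paper. The only differences are cosmetic: the paper union-bounds over the $n$ events $\{X_{v,\col(v)}>T\}$ (fixing $i=\col(v)$) rather than over all $n\ell$ pairs, and it handles the range of $\mu$ via a dyadic parameter $m$ to obtain the bound $\exp(-c\log n/6)$ with an ostensibly $\eps$-free constant $c=66/\log e$, in place of your case split on $\delta$; your explicit remark that the small-$\delta$ regime forces $c=\Theta(1/\eps)$ is in fact the more careful accounting of the two.
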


\begin{proof}
Given any $i\in L$, fix a vertex $v\in V_i$. For each neighbor $u$ of $v$ in $G$, 
let $Y_u$ denote the indicator random variable such 
that $Y_u = 1$ if $u$ has the same color as $v$ after the
pre-processing step, 
and $Y_u = 0$ otherwise. Let $X_v =\sum_{u \in N(v)} Y_u$ 
denote the number of neighbors of $v$ with the 
same color as $v$ after the pre-processing step. 
By linearity of expectation, we get
\begin{align*}
    \EE[X_v] = \dfrac{\deg(v)}{\big\lceil{\frac{\eps \Delta}{2c \log n}} \big\rceil} \leq \dfrac{\Delta}{\frac{\eps\Delta}{2c \log n}} = \dfrac{2c}{\eps}\cdot \log n.
\end{align*}
Also, 
\begin{align*}
    \EE[X_v] = \dfrac{\deg(v)}{\big\lceil{\frac{\eps\Delta}{2c \log n}} \big\rceil} 
> \dfrac{1}{\frac{\eps\Delta}{c \log n}} = \dfrac{2c}{\eps}\cdot \dfrac{\log n}{2\Delta}.
\end{align*}
Then we pick some $m \in \{1,\ldots,\Delta\}$ such that 
$\dfrac{2c}{\eps}\cdot\dfrac{\log n}{2m} < E[X_v] \leq \dfrac{2c}{\eps}\cdot\dfrac{\log n}{m}$. 
\begin{align*}
    \Pr[X_v > (1 + 2/\eps)c\log n] 
    & \leq \Pr[X_v > (1+\eps/2)m E[X_v]] 
    && \text{since } E[X_v] \leq \frac{2c}{\eps}\cdot\frac{\log n}{m} \,, \\
    & \leq \exp \left({-\frac{((1+\eps/2)m-1)E[X_v]}{3}}\right)
    && \text{by Chernoff Bound} \,, \\
    & \leq \exp \left({-\frac{(\frac{1+\eps/2}{2}-\frac{1}{2m})\dfrac{2c}{\eps}\cdot\log n}{3}}\right)
    && \text{since } E[X_v] > \dfrac{2c}{\eps}\cdot\frac{\log n}{2m} \,, \\
    & \leq \exp \left({-\dfrac{c\log n}{6}} \right)
    && \text{since } m \geq 1 \,,\\ 
    &= \frac{1}{n^{11}} &&\text{taking }c =\frac{66}{\log e} \,.
\end{align*}
Thus, by union bound, $\Pr[\exists u\in V,~ X_u > (1 + 2/\eps)c\log n] < 1/n^{10}$.
Hence, probability that all vertices have at most $(1 + 2/\eps)c\log n$ 
neighbors with same color 
as themselves is at least $1 - \Pr[\exists u\in V,~  X_u > (1 + 2/\eps)c\log n] > 1- 1/n^{10}$.
\end{proof}

The main result in this section is captured in \cref{thm:color_deg} below.
\begin{theorem}
\label{thm:color_deg}
There is a randomized one pass streaming algorithm that produces a
$(1+\eps)\Delta$-vertex coloring of a graph with maximum degree $\Delta$
using $\tO(n)$ amount of space, for any constant $\eps>0$. Furthermore,
the worst case update time of the algorithm is $O(\eps^{-1}\log n)$.
\end{theorem}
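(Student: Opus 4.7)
The plan is to assemble the theorem by verifying, in order, (i) correctness, (ii) the color budget, (iii) the space bound, and (iv) the update time. The interesting case is $\Delta = \omega(\log n)$; for $\Delta = O(\log n)$ the preliminary comment in the section already handles everything trivially (store the graph and maintain a $(\Delta+1)$-coloring greedily).

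For correctness, I would condition on the high-probability event of \cref{lem:deg_bound}, namely that every induced subgraph $G_i$ has $\Delta_i \leq (1+2/\eps)c\log n < r$. Under this event, whenever \cref{algo:color} reaches \cref{ln:findcolor} on a vertex $u \in V_i$, the number of colors in $L_i$ blocked by $u$'s neighbors in $G_i$ is at most $\Delta_i < r = |L_i|$, so an available color always exists and the algorithm never aborts. The output is a proper coloring because (a) for an edge with endpoints in distinct $V_i, V_j$ the palettes $L_i, L_j$ are disjoint, and (b) for a monochromatic edge inside some $V_i$, the invariant maintained by the ``if colors collide, recolor one endpoint'' rule on $G_i$ (exactly the trivial greedy scheme from the start of the section applied to the stored subgraph $G_i$) guarantees the endpoints have distinct $\col_i$ values after processing.

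For the color budget, the total number of colors used across all palettes is
\begin{align*}
\ell \cdot r
&= \left\lceil \frac{\eps \Delta}{2c\log n}\right\rceil \bigl((1+2/\eps)c\log n + 1\bigr) \\
&\leq \left(\frac{\eps\Delta}{2c\log n} + 1\right)\bigl((1+2/\eps)c\log n + 1\bigr) \\
&= \frac{\eps\Delta}{2} + \Delta + O(\eps^{-1}\log n) + O\!\left(\frac{\eps\Delta}{\log n}\right).
\end{align*}
Since we are in the regime $\Delta = \omega(\log n)$, the lower-order terms are $o(\Delta)$, so for sufficiently large $n$ the total is at most $(1+\eps)\Delta$. (A clean way to write this: choose $n$ large enough that the additive slack is below $\eps\Delta/2$.)

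For the space bound, the palette data $\col$ and the $\col_i$ take $O(n\log n)$ bits. The only other state is the stored edges of the monochromatic subgraphs $G_i$. The total number of such edges equals $\sum_i |E(G_i)| \leq \tfrac12 \sum_i \sum_{v\in V_i} \Delta_i$, which by \cref{lem:deg_bound} is at most $\tfrac12 n \cdot (1+2/\eps)c\log n = \tO(n)$ with high probability. For the update time, processing an edge requires a constant-time color-equality check, and on a monochromatic edge we must search $L_i$ for a color unused by $u$'s $G_i$-neighbors; since $|L_i| = r = O(\eps^{-1}\log n)$ and $u$ has at most $\Delta_i = O(\eps^{-1}\log n)$ neighbors, this search runs in $O(\eps^{-1}\log n)$ time using, e.g., a bitmap of forbidden colors indexed by $L_i$. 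The part I expect to be the mildest obstacle is tying the arithmetic $\ell \cdot r \leq (1+\eps)\Delta$ together with the $\Delta = \omega(\log n)$ assumption cleanly; everything else follows directly from \cref{lem:deg_bound} and the construction of \cref{algo:color}.
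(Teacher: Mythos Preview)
Your proposal is correct and follows essentially the same approach as the paper: the color count via the $\ell\cdot r$ computation, the space bound via \cref{lem:deg_bound}, and the update time via the recoloring search in \cref{ln:findcolor} all match the paper's argument. The only difference is that you spell out the correctness invariant explicitly, whereas the paper handles that in the prose preceding \cref{lem:deg_bound} rather than inside the theorem's proof.
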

\begin{proof}
The number of colors used by \cref{algo:color} can be upper bounded 
by $\ell \cdot r$. Assuming $\Delta = \omega(\log n)$, we get
\begin{align*}
    \ell \cdot r 
    &=  \Big\lceil{\frac{\eps\Delta}{2c\log n}} \Big\rceil \cdot  
        \left( \left(1+ \frac{2}{\eps} \right)c\log n + 1\right) \,, \\
    &\leq  \left(\frac{\eps\Delta}{2c \log n}+1\right)\left(\left(1+ \frac{2}{\eps} \right)c\log n + 1\right) \,, \\
    &\leq (1 +\eps/2)\Delta + o(\Delta) \,, \\
    & \leq (1 +\eps/2)\Delta + (\eps/2)\Delta = (1+\eps)\Delta \,.
\end{align*}
The space usage of the algorithm is dictated by $\max_{i \in L} \Delta_i$.
From \cref{lem:deg_bound}, we have $\Delta_i = O((1/\eps) \log n)$. Hence,
the \cref{algo:color} requires $O((1/\eps) n \log n)$ amount of 
space. The bound on the update time follows from the 
recoloring time of a vertex in \cref{ln:findcolor}.
\end{proof}

\section{$(2+\eps)\alpha$-Vertex Coloring}
\label{sec:arb}
In this section we discuss a $(2+\eps)\alpha$-vertex coloring algorithm 
in the semi streaming model, where $\alpha$ is the arboricity of the input graph.
This significantly extends the class of graph families for which 
efficient coloring algorithms can be designed. 
In this section, we assume that $\alpha$ is known to the algorithm. 
In the full version of this paper
we discuss how to remove this assumption, albeit at the expense of slightly 
larger palette of colors. 

To design a more color efficient algorithm for 
bounded arboricity graphs, at a high level we follow the 
strategy of Barenboim and Elkin~\cite{barenboim2010sublogarithmic}.
They designed a distributed coloring algorithm with 
$O(\alpha)$ many colors in $O(\alpha \log n)$ many rounds.
We first discuss the central idea of their algorithm, and then
discuss the challenges in implementing those ideas in the streaming model.
Assume, given a graph $G=(V,E$) of arboricity $\alpha$, and a small constant 
$\gamma>0$, we partition the vertices in $V$ into 
$k = O(({1}/{\gamma})\log n )$ many disjoint subsets
$H_1,H_2,\ldots,H_{k}$ such that the following property holds.
\begin{enumerate}[label={P.\arabic*}]
    \item  \textbf{Bounded Degree Vertex Partition:} For every vertex $v\in H_i$, $i\in[k]$, 
            it has at most $(2+\gamma)\alpha$ many neighbours in the vertex set $\cup_{j=i}^{k} H_j$.            
            \label{item:prop}
\end{enumerate}
Such a partitioning then enables us to orient the edges in a way so that 
the resulting directed graph is, in fact, a DAG with maximum out-degree
of any vertex bounded by $(2+\gamma)\alpha$. For instance, consider
the following orientation process. For an edge $\{u,v\}$, orient 
it from the vertex with lower partition number to higher partition number. 
If both $u$ and $v$ are in the same partition, then orient them 
from lower vertex id to higher vertex id. It is not difficult to 
show that such an orientation is acyclic. If there is a 
cycle in the original graph, then there must be at least one vertex
in that cycle that has two outgoing edges in the oriented graph. 
Since any DAG with maximum out-degree $\Delta$ can be colored 
using at most $(\Delta+1)$ many colors, the oriented graph
leads to a $((2+\gamma)\alpha + 1)$-vertex coloring algorithm 
in the distributed settings, although 
not in a straight forward manner. The algorithm requires 
$(\alpha \log n)$ many rounds. Another interesting 
property of the vertex partitioning is that the edge orientations
are implicitly defined by the partition itself. Hence we do not
need to store edge specific information in order to maintain
the oriented graph. We list property of the bounded degree acyclic 
orientation of the edges of a graph in the following item.
\begin{enumerate}[label={P.2}]
    \item  \textbf{Bounded Degree Acyclic Graph Orientation:} 
            Given a graph $G=(V,E)$ with arboricity $\alpha$,
            and a small constant $\gamma >0$, an orientation 
            of the edges is called bounded degree acyclic 
            graph orientation if the orientation is acylic and
            maximum out degree of any vertex given by the 
            orientation is at most $(2+\gamma)\alpha$.
            \label{item:orient}
\end{enumerate}
Note that a vertex partition with property~\ref{item:prop} leads
to an edge orientation with property~\ref{item:orient}.

We now discuss the challenges in converting these ideas into an 
algorithm in the semi streaming model. The first challenge is to 
derive a vertex partitioning with property~\ref{item:prop} using 
only $\tO(n)$ space. This turn out to be a rather easy 
task if we are allowed $O((1 / \gamma)\log n)$ many passes. 
In order to find a vertex partitioning 
that satisfies property~\ref{item:prop}, 
\cite{barenboim2010sublogarithmic} gives a simple greedy
algorithm that iteratively removes 
vertices of degree at most $(2+\gamma)\alpha$ from the graph.
They show after $O( ({1}/{\gamma})\log n )$ many iterations,
the desired partitioning is achieved.
This process easily translates to a $O({1}/{\gamma} \log n)$ pass 
$\tO(n)$ space deterministic algorithm
in the streaming model. For the sake of 
completeness we include a description of this
algorithm in \cref{subsec:orient}. 


The second challenge is to design a one pass streaming 
algorithm that can find a $\Delta^{\out}+1$ coloring
for a DAG with maximum out-degree $\Delta^{\out}$.
Such an algorithm is easy to design in the offline setting, 
where we can store the entire graph. For example, consider a
greedy coloring algorithm that operates on the 
reverse topologically sorted ordering of the 
vertices. It assigns a vertex first available color permitted
by the already colored neighbors of the vertex.
It is easy to see that the algorithm produces a $(\Delta^{\out}+1)$-vertex coloring. 
In the distributed setting, \cite{barenboim2010sublogarithmic}
devises an algorithm that requires $O(\alpha \log n)$
many rounds. We overcome this obstacle 
by leveraging ideas from our $(1+\eps)\Delta$-coloring
algorithm. Instead of working on the DAG directly, 
we consider a two phase coloring process. In the first phase,
we color the vertices using roughly $O(\alpha/ \log n)$ 
many colors. This results in that many monochromatic 
subgraphs, such that each subgraph, when viewed 
as a oriented graph with respect to the vertex partitioning,
has maximum out degree bounded by $O(\log n)$. Hence,
in the second phase, we use the offline algorithm to color each of the
monochromatic subgraphs using a distinct palette. 
Setting the parameters suitably in the big `O' notation, 
we bound the number of colors by $(2+\eps)\alpha$.

\subsection{Graph Orientation}
\label{subsec:orient}
In this section, we give an algorithm to find an 
orientation of the edges that has property~\ref{item:orient}.
Our algorithm is a straight forward adaptation of the distributed 
edge orientation algorithm by Barenboim
and Elkin~\cite{barenboim2010sublogarithmic} in 
the streaming model. The algorithm does not explicitly
orient the edges, rather it finds a partitioning of the 
vertex set that satisfies property~\ref{item:prop}. The
edge orientations are implicitly achieved by this
partitioning. By discussion in the beginning of \cref{sec:arb},
it follows that the orientation has property~\ref{item:orient}.
We now present the procedure in \cref{algo:orient}.
\begin{algorithm}
    \caption{$(2+\gamma)\alpha$-Bounded Degree Graph Orientation Algorithm}
    \label{algo:orient}
    \begin{algorithmic}[1] 
        \Require $G=(V,E)$, $\alpha$, $\gamma$.
        \Ensure A partitioning of the vertex set $V= H_1 \cup H_2 \cup \ldots \cup H_{k}$.
        \Statex \textbf{Vertex Partitoning:}
        \Indent
            \State $G_0 = G, A_0=V$.
            \State $i=1$.
            \While {$A_{i} \neq \emptyset $} \Comment{Each iterations requires one pass over the input stream}
                \State $H_i = \{v\in A_{i-1} ~:~ \deg (v) \text{ in } G_{i-1} \leq (2+\gamma) \alpha \}$.
                \State $A_i = A_{i-1} \setminus H_i$.
                \State Let $G_i$ be the graph induced by $A_i$.
                \State $i = i+1$.
            \EndWhile
            \State Let $k = i-1$. \Comment{$\{H_1,H_2,\ldots,H_k\}$ creates a partition of $V$}
        \Unindent
        \Statex \textbf{Implicit Edge Orientation:}
        \Indent
            \For{each edge $e=\{u,v\} \in G$}
                \State Let with $u \in H_{j_1}$ and $v\in H_{j_2}$, for some $j_1,j_2\in [k]$, 
                        such that $j_1\leq j_2$.  
                \If{$j_1=j_2$}
                    \State Orient $e$ from vertex with lower id to higher id.
                \Else
                    \State Orient $e$ from $u$ to $v$.
                \EndIf
            \EndFor
        \Unindent
    \end{algorithmic}
\end{algorithm}

In analyzing the algorithm, \cite{barenboim2010sublogarithmic} 
showed that $k = O(({1}/{\gamma})\log n)$.
As a result, we have a $O(({1}/{\gamma})\log n)$-pass streaming
algorithm.
\begin{lemma}
\label{lem:orient}
There is a $k$ pass,
$\tO(n)$ space streaming algorithm that partitions the 
vertex set into $k$ many disjoint subsets with property
~\ref{item:prop}, for $k=O\left(\frac{1}{\gamma}\log n \right)$.
\end{lemma}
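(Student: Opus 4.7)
The plan is to verify three things about \cref{algo:orient}: correctness of property~\ref{item:prop}, the bound $k = O((1/\gamma)\log n)$ on the number of iterations, and the per-pass space usage of $\tO(n)$.

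First, I would argue that property~\ref{item:prop} holds by construction. Observe that at any point in the algorithm, $A_{i-1} = \bigcup_{j \geq i} H_j$ (the set of vertices not yet assigned). When the algorithm places a vertex $v$ in $H_i$, it does so precisely because $v$'s degree in $G_{i-1}$ is at most $(2+\gamma)\alpha$; since $G_{i-1}$ is the subgraph induced by $A_{i-1}$, this degree equals the number of $v$'s neighbours that lie in $\bigcup_{j=i}^k H_j$, which is exactly what property~\ref{item:prop} requires.

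The heart of the argument is bounding $k$. I would use the Nash--Williams characterization cited in the preliminaries: every induced subgraph on $S \subseteq V$ has at most $\alpha(|S|-1) < \alpha |S|$ edges, so the sum of degrees in $G_i$ is strictly less than $2\alpha|A_i|$. If $x_i$ denotes the number of vertices in $A_{i-1}$ whose degree in $G_{i-1}$ exceeds $(2+\gamma)\alpha$, then
\[
x_i \cdot (2+\gamma)\alpha \;<\; 2\alpha|A_{i-1}|, \qquad\text{so}\qquad x_i \;<\; \tfrac{2}{2+\gamma}\,|A_{i-1}|.
\]
Hence $|A_i| = x_i < \tfrac{2}{2+\gamma}|A_{i-1}|$, giving the geometric decrease $|A_i| < \bigl(\tfrac{2}{2+\gamma}\bigr)^{i} n$. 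Solving $\bigl(\tfrac{2}{2+\gamma}\bigr)^{k} n < 1$ with $-\log(1 - \gamma/(2+\gamma)) = \Theta(\gamma)$ for small $\gamma$ yields $k = O((1/\gamma)\log n)$, which is the needed bound on the number of iterations (and thus passes).

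Finally, I would confirm the per-iteration streaming implementation. Entering the $i$-th pass, the algorithm keeps the set $A_{i-1}$ in memory (a bitmap of size $n$) and a degree counter for each vertex of $A_{i-1}$ initialized to zero. As each edge $\{u,v\}$ arrives, we increment the counters of $u$ and $v$ if both lie in $A_{i-1}$, else discard. After the pass, we scan the counters to form $H_i$ and $A_i$. This uses $O(n \log n) = \tO(n)$ bits per pass, and a single pass suffices per iteration, giving the claimed $k$-pass, $\tO(n)$-space algorithm. The only slightly delicate point in the entire argument is the geometric-shrinkage calculation, which reduces to the standard Nash--Williams averaging above; no obstacle beyond that is anticipated.
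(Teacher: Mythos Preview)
Your proposal is correct and matches the paper's approach: the paper simply presents \cref{algo:orient} and defers the bound $k=O((1/\gamma)\log n)$ to \cite{barenboim2010sublogarithmic} without spelling out the details, while you have supplied exactly the standard Nash--Williams averaging argument that underlies that citation, together with the straightforward verification of property~\ref{item:prop} and the per-pass $\tO(n)$ space accounting. There is nothing to correct or add.
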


\subsection{Coloring Algorithm}
\label{subsec:algo}
In this section, we give a $(2+\eps)\alpha$-vertex coloring algorithm. 
Note that if $\alpha = O(\log n)$, then 
we can store the entire graph using $\tO(n)$ space. 
So we consider the interesting case when $\alpha = \omega(\log n)$. 
We assume that $\alpha$ is known to us. 

At first, we use \cref{algo:orient}
to partition the vertex set $V$ into $O((1/ \gamma)\log n)$ many disjoint 
subsets such that property~\ref{item:prop} holds,
for some small constant $\gamma>0$. The parameter 
$\gamma$ is set as a function of the input parameter $\eps$.
This ensures that the maximum out degree of any vertex in 
the implicit orientation of the
edges is bounded by $(2+\gamma)\alpha$. In parallel, we 
consider a decomposition of the input graph into
$O(\alpha/ \log n)$ many subgraphs. This is achieved by 
assigning every vertex a color picked independently and uniformly
at random from a set of $O(\alpha/\log n)$ many colors, and then 
considering monochromatic induced subgraphs. We have already 
demonstrated the usefulness of this idea in designing a 
$(1+\eps)\Delta$-vertex coloring algorithm in \cref{sec:deg}.
Following the same line of argument, we show that in 
every monochromatic induced subgraph, maximum out degree
of any vertex with respect to the edge orientations is
bounded by $O(\log n)$. We give the details of this 
process in \cref{algo:alphacolor}.
\begin{algorithm}
    \caption{$(2+\eps)\alpha$-Vertex Coloring Algorithm}
    \label{algo:alphacolor}
    \begin{algorithmic}[1] 
        \Require $G=(V,E)$, $\alpha$, $\eps$.
        \Ensure $(2+\eps)\alpha$-vertex coloring of $G$.
        \Statex \textbf{Pre-processing:}
        \Indent
            \State $\eps' \gets \eps/6$; $\gamma \gets \eps/3$. 
            \State $\ell \gets \Big\lceil{\dfrac{\eps'}{c}\cdot\dfrac{(2+\gamma)\alpha}{ \log n}} \Big\rceil $,
                    where $c$ is a large constant. Let $L=\{1,2,\ldots,\ell \}$.
            \State Assign each vertex in $G$ a color from the set $L$ 
                    independently and uniformly at random. 
            \State Denote this coloring by $\col: V(G) \to L$.        
            \State Let $G_i$ be the monochromatic subgraph of $G$
            induced by color $i$, for all $i\in L$.
        \Unindent
        \Statex \textbf {Stream Processing:}
        \Indent 
            \For{each edge $e=\{u,v\}$ in the stream}:
            \If{$\col(u) = \col(v)$}: 
                \State Store $e$ in $G$.
            \EndIf
            \EndFor
            \State  In parallel, call \cref{algo:orient} with input $G,\alpha, \gamma$.
                    Denote the output by $\{H_1,H_2,\ldots,H_k\}$.
        \Unindent 
        \Statex \textbf{Post-processing:}
        \Indent
            \State  Let $\Delta_i^{\out}$ be the maximum out-degree in the oriented monochromatic subgraph $G_i$ for all $i\in [\ell]$.
            \State Color $G_i$ with $\Delta_i^{\out}+1$ many new colors using offline coloring algorithm.
        \Unindent
    \end{algorithmic}
\end{algorithm}

We next analyze \cref{algo:alphacolor}. From \cref{lem:orient}
we have, $k = O\left(\frac{1}{\gamma}\log n \right)$. 
We first prove the bound on the number of colors used. 
It is easy to see that the 
algorithm produces a proper coloring, since each subgraph
$G_i$ is colored using a distinct palette. The number of colors is 
bounded by $\sum_{i=1}^{\ell}(\Delta_i^{\out} +1 )$. The space 
usage of the algorithm is $\tO(n \cdot (\max_{i\in [\ell]} \Delta_i^{\out}))$.
Hence, we focus on bounding $\Delta_i^{\out}$, which is handled by
\cref{lem:outdeg_bound}.
\begin{lemma} 
\label{lem:outdeg_bound}
Let $\Delta_i^{\out}$ be as defined in \cref{algo:alphacolor}. Then,
with probability at least $1-\frac{1}{n^{10}}$, $\Delta_i^{\out} \leq (1+1/\eps')c \log n$ 
for all $i\in [\ell]$.
\end{lemma}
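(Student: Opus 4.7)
The plan is to adapt the proof of \cref{lem:deg_bound} almost verbatim, replacing ``degree'' with ``out-degree'' and $\Delta$ with the out-degree bound $(2+\gamma)\alpha$. By \cref{lem:orient} together with property~\ref{item:orient}, once the invocation of \cref{algo:orient} inside \cref{algo:alphacolor} completes, the implicitly oriented graph is acyclic and every vertex has out-degree at most $(2+\gamma)\alpha$. Crucially, the random coloring $\col$ chosen in the pre-processing step of \cref{algo:alphacolor} is independent of the input graph, and therefore of this orientation, so for any fixed vertex $v$ the colors assigned to the vertices in $N^+(v)$ are mutually independent uniform draws from $L$.

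Fix $v\in V$, let $N^+(v)$ be its out-neighborhood, and define
\[
X_v \;=\; \bigl|\{u\in N^+(v)\,:\,\col(u)=\col(v)\}\bigr|,
\]
so that $\Delta_i^{\out}=\max_{v\in V_i}X_v$ and it suffices, by union bounds over $v$ and $i$, to show $\Pr[X_v>(1+1/\eps')c\log n]\leq 1/n^{11}$ for every $v$. The indicators $\mathbf{1}[\col(u)=\col(v)]$ for $u\in N^+(v)$ are mutually independent Bernoullis of probability $1/\ell$, and using $\ell\geq(\eps'/c)(2+\gamma)\alpha/\log n$ we obtain
\[
\EE[X_v] \;=\; \frac{\deg^{\out}(v)}{\ell}\;\leq\;\frac{(2+\gamma)\alpha}{\ell}\;\leq\;\frac{c}{\eps'}\log n.
\]

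To turn this expectation bound into a high-probability bound, I follow the case-splitting trick of \cref{lem:deg_bound}. Set $M:=(c/\eps')\log n$ and pick the integer $m\in\{1,\ldots,\deg^{\out}(v)\}$ with $M/(2m)<\EE[X_v]\leq M/m$. Since $(1+1/\eps')c\log n=(1+\eps')M\geq(1+\eps')\,m\,\EE[X_v]$, it suffices to bound $\Pr[X_v\geq(1+\delta)\EE[X_v]]$ for $\delta=(1+\eps')m-1$. A Chernoff bound combined with the lower estimate $\EE[X_v]>M/(2m)$ and the elementary inequality $(1+\eps')/2-1/(2m)\geq\eps'/2$, valid for all $m\geq 1$, yields a bound of the shape $\exp(-c\log n/6)$. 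Taking $c$ to be a sufficiently large constant (exactly as in \cref{lem:deg_bound}) renders this at most $1/n^{11}$, and a union bound over the $n$ vertices delivers the claimed $1/n^{10}$ failure probability.

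The only mildly subtle point is the boundary case $m=1$, where the multiplicative Chernoff slack is only $\eps'<1$ rather than lying in the comfortable $\delta\geq 1$ regime, so one must invoke the quadratic form of the bound (the $\delta^2\mu/3$ exponent). Here the lower estimate $\EE[X_v]>M/2$ is just strong enough to keep the deviation probability polynomially small, absorbing any $1/\eps'^{\,2}$ factor into the constant $c$. This is precisely the computation already carried out in \cref{lem:deg_bound}, so it transfers with no new idea: substitute $\Delta\mapsto(2+\gamma)\alpha$ and $\eps\mapsto\eps'$, and the same algebra goes through.
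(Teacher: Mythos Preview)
Your proposal is correct and follows essentially the same route as the paper's proof: bound $\EE[X_v]$ above and below, bracket it by a dyadic interval $M/(2m)<\EE[X_v]\le M/m$, apply a multiplicative Chernoff bound with $\delta=(1+\eps')m-1$, simplify using $m\ge 1$, and union-bound over all vertices. Your explicit handling of the $m=1$ boundary (invoking the quadratic-exponent Chernoff form and letting $c$ absorb the $\eps'$-dependent loss) is in fact slightly more careful than the paper, which applies the linear-exponent form uniformly.
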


\begin{proof}
Given any $i\in L$, fix a vertex $v\in V_i$. 
From the fact that $\{H_1,\ldots,H_k\}$ has property \ref{item:prop} 
and our definition of orientation of the edges, 
it follows that $v$ has out-degree at most $(2+\gamma)\alpha$. For each out-neighbor $u$ of $v$, 
let $Y_u$ denote the indicator random variable such 
that $Y_u = 1$ if $u$ has the same color as $v$ after the
pre-processing step, 
and $Y_u = 0$ otherwise. Then $X_v =\sum_{u \in N^{\out}(v)} Y_u$ denotes the number of out-neighbors of $v$ with the 
same color as $v$ after the pre-processing step.
By linearity of expectation, we get 
\begin{align*}
\EE[X_v] 
= \dfrac{\deg^+(v)}{\Big\lceil{\frac{\eps'}{c}\cdot\frac{(2+\gamma)\alpha}{ \log n}} \Big\rceil} 
\leq \dfrac{(2+\gamma)\alpha}{\frac{\eps'}{c}\cdot\frac{(2+\gamma)\alpha}{ \log n}} 
= \dfrac{c}{\eps'}\cdot \log n.    
\end{align*}
Also,
\begin{align*}
    \EE[X_v] 
= \dfrac{\deg^+(v)}{\Big\lceil{\frac{\eps'}{c}\cdot\frac{(2+\gamma)\alpha}{ \log n}} \Big\rceil} 
> \dfrac{1}{\frac{\eps'}{c}\cdot\frac{2(2+\gamma)\alpha}{ \log n}}
= \dfrac{c}{\eps'}\cdot \dfrac{\log n}{(4+2\gamma)\alpha}.
\end{align*}
Then we can pick $m \in \{1,\ldots,\ceil{(2+\gamma)\alpha}\}$ such that 
$\dfrac{c}{\eps'}\cdot\dfrac{\log n}{2m} < E[X_v] \leq \dfrac{c}{\eps'}\cdot\dfrac{\log n}{m}$.
\begin{align*}
    \Pr[X_v > (1 + 1/\eps')c\log n] 
    & \leq \Pr[X_v > (1+\eps')m E[X_v]] 
    && \text{since } E[X_v] \leq \frac{c}{\eps'}\cdot\frac{\log n}{m} \,, \\
    & \leq \exp \left({-\frac{((1+\eps')m-1)E[X_v]}{3}}\right)
    && \text{by Chernoff Bound} \,, \\
    & < \exp \left({-\frac{\left(\dfrac{1+\eps'}{2}-\dfrac{1}{2m}\right)\dfrac{c}{\eps'}\cdot\log n}{3}}\right)
    && \text{since } E[X_v] > \dfrac{c}{\eps'}\cdot\frac{\log n}{2m} \,, \\
    & \leq \exp \left({-\dfrac{c\log n}{6}} \right)
    && \text{since } m \geq 1 \,,\\ 
    &\leq \frac{1}{n^{11}} &&\text{taking } c \geq \frac{66}{\log e} \,.
\end{align*}
Then, by union bound, we get probability that all vertices have at most $(1+1/\eps')c \log n$ 
out-neighbors with same color as themselves is at least $1 - \Pr[\exists z~ X_z > (1+1/\eps')c \log n]] > 1- 1/n^{10}$.
\end{proof}
\noindent
Thus w.h.p. each oriented monochromatic subgraph $G_i$ will have $\Delta_i^{\out} \leq (1+1/\eps')c \log n$. 
Then total number of colors used after the second phase is
\begin{align*}
    \Big\lceil{\dfrac{\eps'}{c}\cdot\dfrac{(2+\gamma)\alpha}{ \log n}} \Big\rceil \cdot  
        \left( \left(1+ \frac{1}{\eps'} \right)c\log n + 1\right) 
        &\leq  \left(\dfrac{\eps'}{c}\cdot\dfrac{(2+\gamma)\alpha}{ \log n}+1\right)\left(\left(1+ \frac{1}{\eps'} \right)c\log n + 1\right) \,, \\
    &\leq (2+\gamma)(1 +\eps')\alpha + o(\alpha) \,, \\
    & \leq (2 +2\eps/3 + \eps^2/18)\alpha + o(\alpha ) \,, \\
    &\leq (2+\eps)\alpha \,,
\end{align*}
where the second last inequality follows by pluggin in the
values for $\eps'$ and $\gamma$. 
The \cref{thm:color_alpha} below captures our main result.
\begin{theorem}
\label{thm:color_alpha}
Given a graph with arboricity $\alpha$, and a small positive 
constant $\eps$, there is a randomized $O(\frac{1}{\eps}\log n)$
pass streaming algorithm that finds a
$(2+\eps)\alpha$-vertex coloring of the input graph using $\tO(n)$ 
amount of space.
\end{theorem}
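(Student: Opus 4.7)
The plan is to verify three claims about Algorithm~\ref{algo:alphacolor}: that it returns a proper coloring of $G$, that it uses at most $(2+\eps)\alpha$ colors with high probability, and that it fits in $\tO(n)$ space and $O(\eps^{-1}\log n)$ passes. First I would dispose of the easy regime $\alpha = O(\log n)$: here the entire graph fits in $\tO(n)$ space, so a single pass that stores every edge together with the offline $2\alpha$-coloring mentioned at the start of Section~\ref{sec:arb} suffices. Assume henceforth that $\alpha = \omega(\log n)$.

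For correctness I would argue as follows. Each monochromatic subgraph $G_i$ receives its own palette $L_i$, and the $L_i$'s are pairwise disjoint, so any edge whose endpoints straddle two different $G_i$'s is automatically bichromatic. For an edge inside a single $G_i$, I would appeal to Lemma~\ref{lem:orient} together with the partition-based orientation rule described at the beginning of Section~\ref{sec:arb}: the partition $\{H_1,\ldots,H_k\}$ is computed on the full graph $G$ and induces an acyclic orientation whose restriction to $G_i$ remains acyclic with maximum out-degree $\Delta_i^{\out}$. An offline greedy coloring in reverse topological order on $G_i$ then uses $\Delta_i^{\out}+1$ colors, which is exactly what the post-processing step performs. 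Combined with the disjointness of palettes, this yields a proper coloring of $G$.

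For the color count, Lemma~\ref{lem:outdeg_bound} gives $\Delta_i^{\out}\leq (1+1/\eps')c\log n$ for every $i \in L$ with probability at least $1-1/n^{10}$, so the total number of colors is bounded by $\ell\cdot(\Delta_i^{\out}+1)$; substituting $\ell$, $\eps'=\eps/6$, and $\gamma=\eps/3$ and expanding as in the display preceding the theorem yields $(2+\gamma)(1+\eps')\alpha + o(\alpha) \leq (2+\eps)\alpha$. For the resource bounds, Lemma~\ref{lem:orient} gives $k = O(\eps^{-1}\log n)$ passes and $\tO(n)$ space for Algorithm~\ref{algo:orient}, and the monochromatic-edge filter that runs in parallel on every pass adds no additional passes. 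Total storage for the $G_i$'s is bounded by $n\cdot\max_{i\in[\ell]}\Delta_i^{\out} = \tO(n)$, again by Lemma~\ref{lem:outdeg_bound}. The main obstacle I expect is justifying the parallel composition carefully: one must feed Algorithm~\ref{algo:orient} the full stream rather than the filtered monochromatic graphs, so that the inherited per-subgraph orientation really has out-degree controlled by the arboricity of $G$, while only committing monochromatic edges to memory for the eventual offline coloring step.
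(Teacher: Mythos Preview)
Your proposal is correct and follows essentially the same approach as the paper: the paper's argument is exactly the combination of Lemma~\ref{lem:orient} for the pass count, Lemma~\ref{lem:outdeg_bound} for the per-subgraph out-degree bound, and the displayed calculation bounding $\ell\cdot\bigl((1+1/\eps')c\log n+1\bigr)\le(2+\eps)\alpha$. Your treatment is slightly more explicit than the paper's on why the final coloring is proper and on the parallel composition of the edge filter with Algorithm~\ref{algo:orient}, but there is no substantive difference in method.
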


\bibliographystyle{alpha}
\bibliography{refs}

\end{document}